\theoremstyle{plain}
\newtheorem{theorem}{Theorem}
\theoremstyle{definition}
\newtheorem{remark}{Remark}
\renewcommand{\maketag@@@}[1]{\hbox{\m@th\normalsize\normalfont#1}}%
\begin{document}

\title{Optimal Feedback Schemes for Dirty Paper Channels With State Estimation at the Receiver}
\author{\IEEEauthorblockN{
Dengfeng~Xia\IEEEauthorrefmark{1}\IEEEauthorrefmark{2},
 Han~Deng\IEEEauthorrefmark{1},
 Haonan~Zhang\IEEEauthorrefmark{1},
 Fan~Cheng\IEEEauthorrefmark{4},
Bin~Dai\IEEEauthorrefmark{1}\IEEEauthorrefmark{2}\IEEEauthorrefmark{3}
Liuguo~Yin\IEEEauthorrefmark{5}}
\IEEEauthorblockA{
\IEEEauthorrefmark{1}
School of Information Science and Technology, Southwest Jiaotong University, Chengdu, 610031, China.}
\IEEEauthorblockA{
\IEEEauthorrefmark{2}
Chongqing Key Laboratory of Mobile Communications Technology, Chongqing 400065, China.}
\IEEEauthorblockA{
\IEEEauthorrefmark{3}
Provincial
Key Lab of Information Coding and Transmission, Southwest Jiaotong
University, Chengdu, 611756, China.}
\IEEEauthorblockA{
\IEEEauthorrefmark{4}
Department of Computer Science and Engineering, Shanghai Jiao Tong University, Shanghai, 200241, China.}
\IEEEauthorblockA{
\IEEEauthorrefmark{5}
Beijing National Research Center for Information Science and Technology, Tsinghua University, Beijing 100084, China.\\
\{xiadengf,denghan,zhanghaonan\}@my.swjtu.edu.cn, chengfan@cs.sjtu.edu.cn, daibin@home.swjtu.edu.cn, yinlg@tsinghua.edu.cn.}}
\maketitle

\begin{abstract}
In the literature, it has been shown that feedback does not increase the optimal rate-distortion region of
the dirty paper channel with state estimation at the receiver (SE-R). On the other hand, it is well-known that feedback helps to construct  low-complexity coding schemes in Gaussian channels, such as the elegant Schalkwijk-Kailath (SK) feedback scheme. This motivates us to explore capacity-achieving SK-type schemes in dirty paper channels with SE-R and feedback.
In this paper, we first propose a capacity-achieving feedback scheme for the dirty paper channel with SE-R (DPC-SE-R), which combines the superposition coding and the classical SK-type scheme.
Then, we extend this scheme to the dirty paper multiple-access channel with SE-R and feedback, and also show the extended scheme is capacity-achieving.
Finally, we discuss how to extend our scheme to a noisy state observation case of the DPC-SE-R.
However, the capacity-achieving SK-type scheme for such a case remains unknown.

\end{abstract}
\begin{IEEEkeywords}
Dirty paper channel, feedback, multiple-access channel, state estimation, Schalkwijk-Kailath scheme.
\end{IEEEkeywords}

%The source coding problem with a vending machine provides
%a useful model for scenarios in which acquiring data as side information is costly and thus should be done effectively. Examples include computer networks, in which data must be obtained
%from remote data bases, and sensor networks, where data are acquired via measurements

\section{Introduction}\label{sec1}

The dirty paper channel (DPC) \cite{co}, which is the additive white Gaussian noise (AWGN) channel with additive Gaussian state interference non-causally known at the transmitter, receives much attention in the literature.
The problem of joint state estimation and communication (JSEC) over
the DPC
was first investigated by \cite{jsec-first}, where the receiver simultaneously decodes the transmitted message and estimates the state interference, and the optimal trade-off between the transmission rate and the mean squared error distortion in estimating the state was totally determined in \cite{jsec-first}.
Recently, \cite{jsec-fb} further investigated the impact of feedback in the DPC with state estimation at the receiver (SE-R)\footnote{In \cite{jsec-fb}, it was shown that the problem of JSEC over the DPC with noiseless feedback is the source-coding problem with a vending machine\cite{app1}, which provides a useful model for computer networks that acquire data from remote databases and sensor networks that acquire data through measurements\cite{app2}.}, and showed that
feedback does not enhance the optimal rate-distortion  trade-off region of the DPC with SE-R.

On the other hand, it is well-known that feedback helps to construct low-complexity coding schemes, such as the elegant Schalkwijk-Kailath (SK) feedback scheme \cite{sk} for the AWGN channel. It has been shown that the SK scheme is not only capacity-achieving, but also has extremely low encoding-decoding complexity since a simple first-order recursive coding approach is applied to the transceiver.
Recently, \cite{dpc-fed} and \cite{nm-tw} extended the SK scheme to the DPC with feedback, and showed that modified SK-type schemes perfectly eliminate non-causal channel state interference and achieve the channel capacity. The basic intuition behind these schemes is that since the transmitter knows all state interference in advance, an estimation offset caused by these state interference in the SK scheme can be computed by the transmitter. Then at the first time instant, if the transmitter inserts a negative value of this estimation offset into the transmission codeword, the entire offset can be perfectly eliminated when the transmission is completed.

Motivated by the excellent performance of the SK-type schemes in the DPC, in this paper, we aim to explore how to design capacity-achieving SK-type schemes for dirty paper channels with SE-R and feedback, and the contribution of this paper is summarized as follows.

First, we propose a capacity-achieving SK-type scheme for the DPC with SE-R and feedback, which
combines the superposition coding for the broadcast channel and the SK-type scheme for the DPC.
Next, we further extend the above scheme to a  multiple-access case, namely,
the dirty paper multiple-access channel (DP-MAC) with SE-R and feedback.
By establishing a corresponding upper bound, we show that this extended scheme is optimal, and determine the optimal rate-distortion trade-off region of the DP-MAC with SE-R and feedback, which remains unknown in the literature.
Finally, we show that by using an equivalent channel model, our proposed schemes can be directly extended to a noisy state observation case of the DPC with SE-R and feedback. However, we should point out that the capacity-achieving SK-type scheme for such a noisy observation case remains unknown.

The remainder of this paper is organized as follows.
An optimal feedback scheme for the DPC with SE-R is provided in Section \ref{sec-2}.
Section \ref{sec-3} shows the main results about the DP-MAC with SE-R and feedback.
Section \ref{sec-4} discusses a noisy state observation case of the DPC with SE-R and feedback.

\section{The DPC with SE-R and Feedback}\label{sec-2}
\subsection{Model Formulation}\label{sec2-1}
\begin{figure}[h!]
	\centering
	\includegraphics[width=0.96\linewidth]{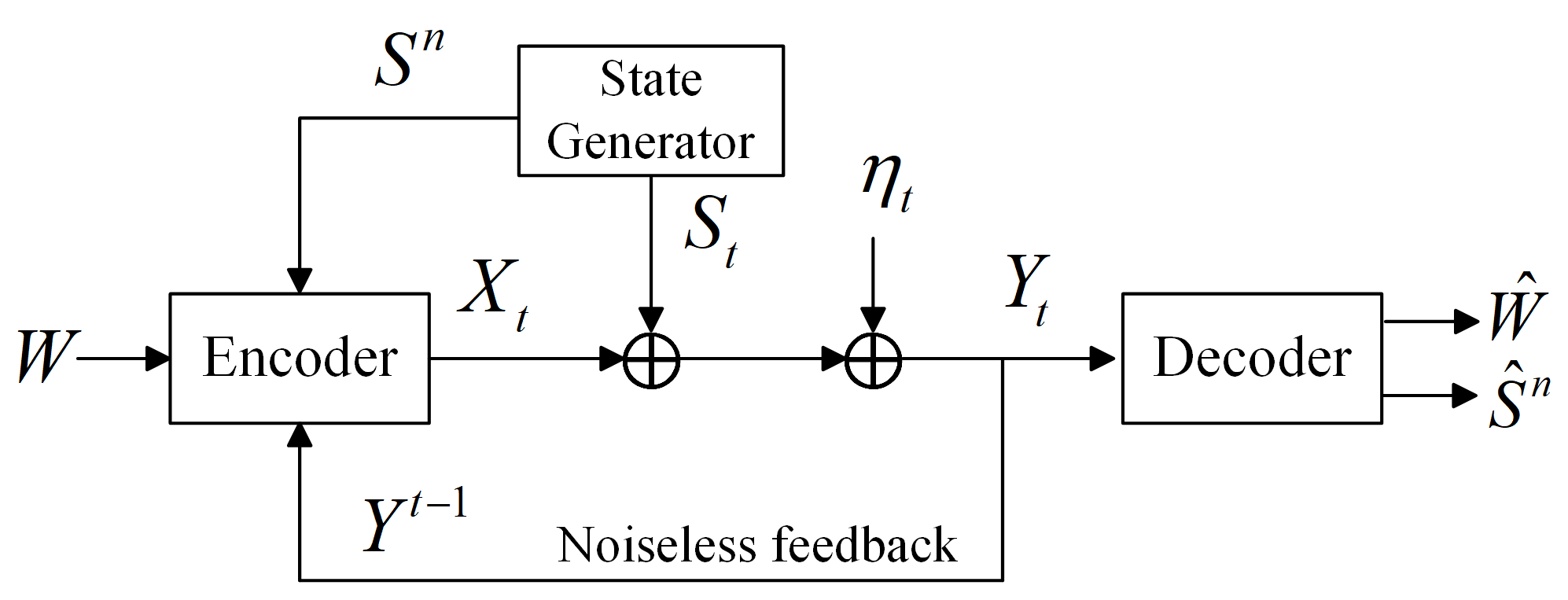}
	\caption{The DPC with SE-R and  feedback}
	\label{figmod}
\end{figure}
The DPC with SE-R and feedback is depicted in Fig. \ref{figmod}, and
the channel output at time instant $t$ is
\begin{equation}
	Y_t=X_t+S_t+\eta_t, \, t=1,2,\ldots,n,
\end{equation}
where $X_t$ is the channel input, $\eta_t$ is an independent identically distributed (i.i.d.) Gaussian noise with zero-mean and  variance $\sigma^2$, i.e., $\eta_t \sim \mathcal{N}(0,\sigma^2)$,
and
$S_t$ is an i.i.d. Gaussian state interference with zero-mean and variance $Q$, i.e., $S_t \sim \mathcal{N}(0,Q)$.
The state sequence $S^n=(S_1,S_2,\ldots,S_n)$ is known by the transmitter in a non-causal manner. In addition, we assume that $\eta_t$, $S_t$ and the message $W$ are independent of each other.

An $(n,R,P)$-code for the model of Fig. \ref{figmod}  consists of:
\begin{enumerate}[1)]
	\item a uniformly distributed message $W$ takes values in the set $\mathcal{W}= \{1,2,\ldots,2^{nR}\}$;
	\item an  encoder with output $X_t=\varphi_t(W,S^n,Y^{t-1})$
	satisfying the average power constraint $\frac{1}{n}\sum\nolimits_{i=1}^{n}E[X^2_{t}]\leq P$, where $\varphi_{t}$ is the $t$-th encoding function;
	\item  a decoder with outputs $\hat{W}=\psi(Y^n)$ and $\hat{S}^n=\phi(Y^n)$, where $\psi$ and $\phi$ are the decoding and estimation functions, respectively.
\end{enumerate}
The average decoding error probability is defined as
\begin{equation}\label{pe-def}
	P_{e}^{(n)}=\frac{1}{2^{nR}}\sum_{w=1}^{2^{nR}}\Pr\{\psi(Y^n)\neq w|w\;\mbox{was sent}\},
\end{equation}
and the mean-squared state estimation error is  defined as
\begin{equation}\label{ee-def}
	{\rm E}d(S^n,\hat{S}^n)=\frac{1}{n}\sum_{t=1}^{n}{\rm E}[(S_t-\hat{S}_t)^2].
\end{equation}

A rate-distortion (R-D) pair $(R, D)$ is \emph{achievable} if for  any $\epsilon>0$, there exists the above  $(n,R,P)$-code such that $\frac{1}{n}H(W)\ge R-\epsilon$, $P_e^{(n)}\le \epsilon$, and ${\rm E}d(S^n,\hat{S}^n)\le D+\epsilon$ as $n\to \infty$.
The optimal R-D trade-off region  for the DPC with SE-R and feedback is the closure of the convex hull of all achievable $(R, D)$ pairs, denoted as $\mathscr{C}_{\rm dp}^{\rm  fb}$.

The optimal R-D trade-off region $\mathscr{C}_{\rm dp}^{\rm  fb}$ was characterized in \cite[Remark 3]{jsec-fb},
which is given by
\begin{equation}\label{c1}
	\begin{aligned}
		\mathscr{C}_{\rm dp}^{\rm fb}=\bigcup_{0 \le \gamma \le 1}
		\Big
		\{(R,D):
		0\le R\le \frac{1}{2}\log \Big(1+\frac{\gamma P}{\sigma^2}\Big),&\\
		D\ge Q\frac{(\gamma P+\sigma^2)}{(\sqrt{Q}+\sqrt{(1-\gamma)P})^2+\gamma P+\sigma^2}\Big\}&.
	\end{aligned}
\end{equation}

\subsection{An Optimal Feedback Scheme for the DPC with SE-R}\label{sec2-2}
In this subsection, we propose an SK-type feedback scheme for the DPC with SE-R that achieves $\mathscr{C}_{\rm dp}^{\rm  fb}$.

The main idea of this SK-type feedback scheme is explained below.
The transmitted codeword is divided into two parts, i.e.,
\begin{equation}
	X^n=V^n+G^n,
\end{equation}
where the power of $V^n$ is $(1-\gamma)P$, the power of $G^n$ is $\gamma P$, and $0\le \gamma \le 1$ is a power allocation coefficient.
When the transmitter sends $X^n=V^n+G^n$, the received channel output is
\begin{equation}
		Y^n=V^n+G^n+S^n+\eta^n.
\end{equation}
First, since the transmitter knows the channel state $S^n$ in a non-causal manner, the codeword $V^n$ is used to directly transmit $S^n$, i.e., $V^n=\sqrt{\frac{(1-\gamma)P}{Q}}S^n$, which helps to improve the state estimation performance of the receiver.
Then, since $S^n$ and $V^n$ are known by the transmitter, the codeword $G^n$ is used to transmit the message $W$ by applying the SK-type scheme for the DPC with a new non-causal state $V^n+S^n$.
The detailed scheme is given below.

\textit{Encoding:}
At time $1$, the transmitter sends
\begin{equation}\label{x11}
	X_1=\underbrace{\sqrt{12\gamma P}(\theta-O)}_{G_1} + \underbrace{\sqrt{\frac{(1-\gamma)P}{Q}}S_1}_{V_1},
\end{equation}
where $\theta=-\frac{1}{2}+\frac{(2W-1)}{2\cdot2^{nR}}$  is a mapped value of the message $W$ that is the same as the classical SK scheme \cite{sk},
and the offset $O$ is a linear function about state interference that is given by
\begin{equation}
\begin{aligned}
	O=\frac{1}{\sqrt{12\gamma P}}\omega S_1
	-\omega \sum_{i=2}^{n}\mu_{i}S_i,
\end{aligned}	
\end{equation}
$\omega \triangleq 1+ \sqrt{\frac{(1-\gamma)P}{Q}}$
and $\mu_{i}$ will be defined later.
The receiver gets
\begin{equation}
	Y_1=X_1+S_1+\eta_1
=\sqrt{12\gamma P}(\theta-O) + \omega S_1 +\eta_1,
\end{equation}
and calculates the first estimation of $\theta$ as
\begin{equation}\label{es1}
\hat{\theta}_1=\frac{Y_1}{\sqrt{12\gamma P}}
=\theta+\omega\sum_{i=2}^{n}\mu_{i}S_i+\frac{\eta_1}{\sqrt{12\gamma P}}.
\end{equation}
Define $\varepsilon_1 \triangleq \frac{\eta_1}{\sqrt{12\gamma P}}$, and let $\alpha_1\triangleq {\rm Var}[\varepsilon_1]=\frac{\sigma^2}{12\gamma P}$.

At time $t$ ($t\in\{2,3,\ldots,n\}$), the transmitter receives the feedback signal $Y_{t-1}=X_{t-1}+S_{t-1}+\eta_{t-1}$, and
encodes
\begin{equation}\label{xte}
	X_t=\underbrace{\sqrt{\frac{\gamma P}{\alpha_{t-1}}}\varepsilon_{t-1}}_{G_t}+\underbrace{\sqrt{\frac{(1-\gamma)P}{Q}}S_t}_{V_t},
\end{equation}
where $\alpha_{t-1}\triangleq {\rm Var}[\varepsilon_{t-1}]$,
\begin{align}
	\varepsilon_{t-1}&=\varepsilon_{t-2}-\mu_{t-1}(Y_{t-1}-\omega S_{t-1}),\label{eded}\\
		\mu_{t-1}&=\frac{{\rm E}[\varepsilon_{t-2}(Y_{t-1}-\omega S_{t-1})]}
		{{\rm E}[(Y_{t-1}-\omega S_{t-1})^2]}.\label{mude}
\end{align}

From (\ref{xte})-(\ref{mude}), we calculate that
\begin{equation}
	\mu_{i}=\frac{\sqrt{\gamma P \alpha_{t-1}}}{\gamma P+\sigma^2}, \;\; \alpha_t=\alpha_{t-1}\frac{\gamma P}{\gamma P+\sigma^2}.
\end{equation}
Note that from (\ref{x11}), we conclude that the power of $X_1$ is $\gamma P+\left(1+12\gamma P\omega^2\sum_{i=2}^{n}\mu_{i}^2\right)Q$ and is bounded
since $\{\mu_i\}$ is a geometric sequence where the common ratio is $\sqrt{\frac{\gamma P}{\gamma P+\sigma^2}}$.
From (\ref{xte}), the power of $X_t$ ($t\ge 2$) is $P$ since $\epsilon_{t-1}$ is a linear function about $(\eta_1,\eta_2,\ldots\eta_{t-1})$, and it is independent of $S_t$.
Hence, as $n \to \infty$, the average transmission power tends to $P$, which is similar to the arguments in \cite{nm-tw}.

\textit{Decoding:}
At time $t$ ($t\in\{2,3,\ldots,n\}$), the $t$-th estimation of $\theta$ at the receiver is
\begin{equation}\label{sju}
\begin{aligned}
	\hat{\theta}_{t}=\hat{\theta}_{t-1}-\mu_{t}Y_t
	=\hat{\theta}_{t-1}-\omega\mu_{t}S_{t}-\mu_t(Y_{t-1}-\omega S_{t-1}),\\
\end{aligned}
\end{equation}
 Substituting (\ref{es1}) and (\ref{eded}) into (\ref{sju}), we have
\begin{equation}\label{est-t}
		\hat{\theta}_{t}
		=\theta+\omega\sum_{i=t+1}^{n}\mu_{i}S_i+\varepsilon_t.
\end{equation}
According to (\ref{est-t}),
at time $n$, $\hat{\theta}_{n}=\theta+\varepsilon_n$, which is the same as the classical SK scheme and $\varepsilon_n$ is a Gaussian variable with zero-mean and variance $\alpha_n=\frac{\sigma^2}{12\gamma P}\left(\frac{\gamma P}{\gamma P+\sigma^2}\right)^{n-1}$.
Following the error analysis of the SK-type scheme \cite{nm-tw,sk,dpc-fed}, we conclude that the achievable rate satisfies
$R \le  \frac{1}{2}\log \left(1+\frac{\gamma P}{\sigma^2}\right)$.

\textit{State Estimation:}
For $t=1$, $\hat{S}_1=0$ since $Y_1$ does not contain $S_1$.
For $t \ge 2$, since $Y_t$ only contains $S_t$, the receiver does the minimum mean square error (MMSE) estimation of $S_t$ based on the received signal $Y_t$, i.e.,
\begin{equation}
	\hat{S}_t=\frac{{\rm E}[S_tY_t]}{{\rm E}[Y_t^2]}Y_t\stackrel{(a)}=\frac{\sqrt{Q}(\sqrt{Q}+\sqrt{(1-\gamma)P})}{(\sqrt{Q}+\sqrt{(1-\gamma)P})^2+\gamma P+\sigma^2}Y_t,
\end{equation}
where (a) is because the SK-type codeword $G_t$  is a linear function about $(\eta_1,\eta_2,\ldots\eta_{t-1})$, and it is independent of $S_t$ and $\eta_t$.
Hence, the resulting MMSE distortion $D$ is
\begin{equation}
	\begin{aligned}
		D&=\lim_{n\to \infty}\frac{1}{n}\sum_{t=1}^{n}{\rm E}[(S_t-\hat{S}_t)^2]
		=
		\lim_{n\to \infty}\frac{{\rm E}[(S_1)^2]}{n}\\
		&+\lim_{n\to \infty} \frac{n-1}{n}Q\frac{\gamma P+\sigma^2}{(\sqrt{Q}+\sqrt{(1-\gamma)P})^2+\gamma P+\sigma^2}\\
		&=Q\frac{\gamma P+\sigma^2}{(\sqrt{Q}+\sqrt{(1-\gamma)P})^2+\gamma P+\sigma^2}.
	\end{aligned}
\end{equation}

By changing the power allocation coefficient $0\le \gamma \le 1$, the above scheme achieves the optimal R-D trade-off region $\mathscr{C}_{\rm dp}^{\rm  fb}$ given in (\ref{c1}).

\begin{figure}[h!]
	\centering
	\includegraphics[width=0.96\linewidth]{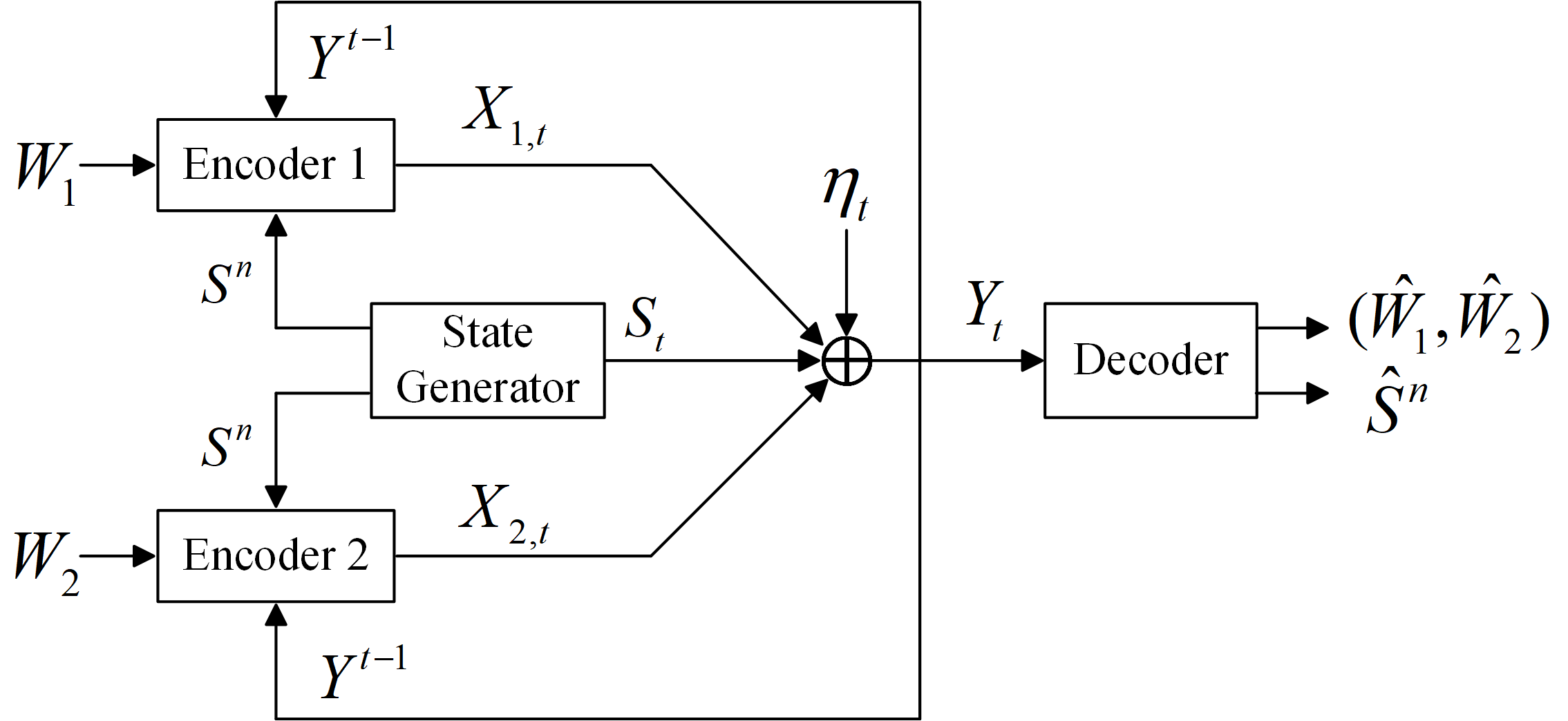}
	\caption{The DP-MAC with SE-R and feedback}
	\label{figmod-2}
\end{figure}

\section{The DP-MAC With SE-R and Feedback}\label{sec-3}

\subsection{Model Formulation}
The DP-MAC with SE-R and feedback is depicted in Fig. \ref{figmod-2}, and  the channel output at time instant $t$ is given by
\begin{equation}
	Y_t=X_{1,t}+X_{2,t}+S_t+\eta_t,  \, t=1,2,\ldots,n,
\end{equation}
where $X_{i,t}$ ($i \in \{1,2\}$) is the output signal of Encoder $i$, $\eta_t \sim \mathcal{N}(0,\sigma^2)$ is an i.i.d. channel noise, and $S_t \sim \mathcal{N}(0,Q)$ is an i.i.d. state interference. Both transmitters know the  state sequence $S^n=(S_1,S_2,\ldots,S_n)$ in a non-causal manner.
In addition, we assume that $\eta_t$, $S_t$ and the messages $W_1,W_2$  are independent of each other.

An $(n,R_1,R_2,P_1,P_2)$-code for the model of Fig. \ref{figmod-2} consists of:
\begin{enumerate}[1)]
	\item two messages $W_1$ and $W_2$, where $W_i$
	is uniformly distributed over the set $\mathcal{W}_i=\{1,2,\ldots,2^{nR_i}\}$ for $i \in \{1,2\}$;
	\item two encoders (Encoder $1$ and Encoder $2$), where Encoder $i$ ($i \in \{1,2\}$) with output $X_{i,t}=\varphi_{i,t}(W_i,S^n,Y^{t-1})$ satisfying the average power constraint $
	\frac{1}{n}\sum_{t=1}^{n}{\rm E}[X_{i,t}^2] \le P_i$;
	\item a decoder with outputs $(\hat{W}_1,\hat{W}_2)=\psi(Y^n)$ and $\hat{S}^n=\phi(Y^n)$, where $\psi$ and $\phi$ are the decoding and estimation functions, respectively.

\end{enumerate}
The average decoding error probability is defined as
\begin{equation}\label{pe-def2}
\begin{aligned}
	P_{e}^{(n)}=\frac{1}{2^{n(R_1+R_2)}}\sum_{w_1=1}^{2^{nR_1}}\sum_{w_2=1}^{2^{nR_2}}\Pr\{&\psi(Y^n)\neq (w_1,w_2)\\
	&|(w_1,w_2)\;\mbox{was sent}\},
\end{aligned}
\end{equation}
and the mean-squared state estimation error ${\rm E}d(S^n,\hat{S}^n)$ satisfies (\ref{ee-def}).

An R-D tuple $(R_1, R_2, D)$ is achievable if for any $\epsilon>0$,
there exists the above $(n,R_1,R_2,P_1,P_2)$-code such that  $\frac{1}{n}H(W_1) \ge R_1-\epsilon$, $\frac{1}{n}H(W_2)\ge R_2-\epsilon$, $P_e^{(n)}\le \epsilon$, and ${\rm E}d(S^n,\hat{S}^n)\le D+\epsilon$ as $n\to \infty$.
The optimal R-D trade-off region for the DP-MAC with  SE-R and feedback is the closure of the convex hull of all achievable $(R_1, R_2, D)$ tuples, denoted as $\mathscr{C}_{\rm dp,mac}^{\rm fb}$.

\subsection{Main Results}
The following Theorem $\ref{th01}$ provides a  complete characterization of $\mathscr{C}_{\rm dp,mac}^{\rm fb}$.

\begin{theorem}\label{th01}
$\mathscr{C}_{\rm dp,mac}^{\rm fb}=\bigcup_{0\le \rho \le 1}\mathcal{R}(\rho)$, and $\mathcal{R}(\rho)$ is given by
	\begin{equation}\label{c2}
		\begin{aligned}
		\mathcal{R}(\rho)&=\bigcup_{0\le \gamma \le 1}\bigcup_{0\le \beta \le 1}
		\Big
		\{(0\le R_1,0\le R_2,D):\\
		R_1 &\le \frac{1}{2}\log \Big(1+\frac{\gamma P_1(1-\rho^{2})}{\sigma^2}\Big),\\
		R_2 &\le \frac{1}{2}\log \Big(1+\frac{\beta P_2(1-\rho^{2})}{\sigma^2}\Big),\\
		R_1+R_2 &\le  \frac{1}{2}\log \Big(1+\frac{\gamma P_1+\beta P_2+2\sqrt{\gamma P_1 \beta P_2}\rho}{\sigma^2}\Big),\\
		D&\ge \frac{Q(\gamma P_1+ \beta P_2+\sigma^2+2\sqrt{\gamma P_1 \beta P_2}\rho)}{L(\gamma,\beta)+2\sqrt{\gamma P_1 \beta P_2}\rho}
		\Big\}.
		\end{aligned}
	\end{equation}
where
\begin{equation}\label{lde}
\begin{aligned}
	L(\gamma,\beta)&=P_1+P_2+Q+\sigma^2+2\sqrt{(1-\gamma)P_1Q}\\
	&+2\sqrt{(1-\beta)P_2Q}+2\sqrt{(1-\gamma)(1-\beta)P_1P_2}.
\end{aligned}
\end{equation}	
\end{theorem}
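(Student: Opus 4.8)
\emph{Proof plan.} I would establish Theorem \ref{th01} in the two usual directions, achievability and converse, reusing the single-user construction of Section \ref{sec-2} as the core building block.

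\emph{Achievability.} The plan is to superpose the state-assisting component and a two-user SK-type message component, exactly mirroring the split of Section \ref{sec-2}. Each Encoder $i$ would transmit $X_i^n = V_i^n + G_i^n$, where $V_1^n = \sqrt{(1-\gamma)P_1/Q}\,S^n$ and $V_2^n = \sqrt{(1-\beta)P_2/Q}\,S^n$ are scaled copies of the common state (with powers $(1-\gamma)P_1$ and $(1-\beta)P_2$) that improve estimation, while $G_1^n, G_2^n$ carry $W_1, W_2$ with powers $\gamma P_1$ and $\beta P_2$. Because both transmitters know $S^n$ non-causally, each precomputes its state-induced estimation offset and inserts the negative of it at time $1$, so that the offset is perfectly eliminated when transmission ends, precisely as in (\ref{x11})--(\ref{est-t}). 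For the pair $G_1^n, G_2^n$ I would run Ozarow's recursive feedback scheme for the two-user Gaussian MAC, in which feedback lets the two transmitters correlate their refinements with coefficient $\rho$; the standard SK error analysis then yields the three rate bounds in (\ref{c2}) with effective powers $\gamma P_1$ and $\beta P_2$ and noise $\sigma^2$.

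\emph{Distortion.} For $t\ge 2$ the receiver performs an MMSE estimate of $S_t$ from $Y_t$. Since each $G_{i,t}$ is a linear function of past noises only and is therefore independent of $S_t$ and $\eta_t$, the state-bearing coefficient in $Y_t$ is $1+\sqrt{(1-\gamma)P_1/Q}+\sqrt{(1-\beta)P_2/Q}$, while the residual $G_{1,t}+G_{2,t}+\eta_t$ has variance $\gamma P_1+\beta P_2+2\sqrt{\gamma P_1\beta P_2}\,\rho+\sigma^2$. A direct computation gives ${\rm E}[Y_t^2]=L(\gamma,\beta)+2\sqrt{\gamma P_1\beta P_2}\,\rho$ and, via the MMSE formula $D=Q-({\rm E}[S_tY_t])^2/{\rm E}[Y_t^2]$, reproduces the distortion lower bound in (\ref{c2}); the $t=1$ term again contributes nothing in the limit, exactly as in Section \ref{sec-2}. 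Sweeping $\gamma,\beta,\rho$ then sweeps out $\bigcup_{\rho}\mathcal{R}(\rho)$.

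\emph{Converse and the main obstacle.} For the rate bounds I would adapt Ozarow's converse for the Gaussian MAC with feedback: the common state $S^n$, known non-causally to both transmitters, acts as correlated side information, and entropy-power and correlation arguments give the three mutual-information bounds after defining $\gamma,\beta$ from the power fractions spent on message transmission and $\rho$ from the induced correlation of the two message-bearing components. For the distortion bound I would use $D\ge \frac{1}{n}\sum_t \mathrm{mmse}(S_t\mid Y^n)$ together with $\frac{1}{2}\log(Q/D)\le \frac{1}{n}I(S^n;Y^n)$ and a second-moment bound on ${\rm E}[Y_t^2]$ under the power constraints. The hard part will be the \emph{joint} converse: one must show that all four constraints in (\ref{c2}) hold for one consistent triple $(\gamma,\beta,\rho)$. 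The difficulty is that feedback manufactures the correlation $\rho$, which simultaneously enlarges the sum-rate and inflates ${\rm E}[Y_t^2]=L(\gamma,\beta)+2\sqrt{\gamma P_1\beta P_2}\,\rho$, thereby worsening the distortion bound (indeed $\partial D/\partial\rho>0$). Extracting $\gamma,\beta,\rho$ from an arbitrary feedback code so that this tension is respected by all bounds at once is, I expect, the crux of the proof.
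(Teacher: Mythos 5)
Your overall architecture matches the paper's: superposition $X_i^n=V_i^n+G_i^n$ with scaled copies of the state plus an Ozarow/SK-type message layer for achievability, and a Fano-plus-covariance converse combined with the bound $\frac{1}{2}\log(Q/D_n)\le\frac{1}{n}I(S^n;Y^n)$. However, there are two genuine gaps. First, in the achievability you treat $\rho$ as a freely tunable design parameter and claim that ``sweeping $\gamma,\beta,\rho$'' covers $\bigcup_{\rho}\mathcal{R}(\rho)$. In an Ozarow-type recursion the correlation between the two error terms is not chosen: it converges to the unique fixed point $\rho^*\in(0,1)$ of $\sigma^2(\gamma P_1+\beta P_2+2\sqrt{\gamma P_1\beta P_2}\rho+\sigma^2)=(\beta P_2(1-\rho^2)+\sigma^2)(\gamma P_1(1-\rho^2)+\sigma^2)$, so the scheme directly achieves only $\mathcal{R}(\rho^*)$. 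The paper then needs two extra steps you omit: showing $\mathcal{R}(\rho)\subseteq\mathcal{R}(\rho^*)$ for $\rho^*<\rho\le 1$, and achieving $\mathcal{R}(\rho)$ for $0\le\rho<\rho^*$ by combining random coding with the SK-type scheme. Without these, the claimed region $\bigcup_{0\le\rho\le 1}\mathcal{R}(\rho)$ is not established.

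Second, you correctly identify the converse ``crux''---that one consistent triple $(\gamma,\beta,\rho)$ must serve all four bounds---but you do not resolve it, and the resolution is the actual content of the converse. The paper's device is to parametrize the per-letter second-order statistics by the covariance matrix $K_{X_{1,i},X_{2,i},S_i}$ with off-diagonal entries $a_i\sqrt{P_1P_2}$, $b_i\sqrt{P_1Q}$, $c_i\sqrt{P_2Q}$, to define the \emph{conditional} correlation $\rho_i=(a_i-b_ic_i)/\sqrt{(1-b_i^2)(1-c_i^2)}$ (the quantity that is forced to zero without feedback), and to set $\gamma=\frac{1}{n}\sum_i(1-b_i^2)$, $\beta=\frac{1}{n}\sum_i(1-c_i^2)$, $\rho=\frac{1}{n}\sum_i\rho_i$; Jensen's inequality then yields all three rate bounds and the distortion bound with the \emph{same} $(\gamma,\beta,\rho)$ by construction. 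Relatedly, your distortion converse should not bound $D$ and the rates separately: the key identity is $R_1+R_2+\frac{1}{2}\log\frac{Q}{D_n}\le\frac{1}{n}I(W_1,W_2,S^n;Y^n)\le\frac{1}{n}\sum_i[h(Y_i)-h(Y_i|X_{1,i},X_{2,i},S_i)]$, whose right side produces $L(\gamma,\beta)+2\sqrt{\gamma P_1\beta P_2}\rho$; substituting the saturated sum rate then gives the distortion bound in (\ref{c2}). A final detail you skip is why restricting to $0\le\rho\le 1$ is without loss of generality, which holds because the region for $-1\le\rho<0$ is contained in that for $0\le\rho\le 1$.
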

\begin{proof}
See the next subsection.
\end{proof}
\begin{remark}\label{rem1}
	For the  DP-MAC with SE-R and without feedback, the optimal R-D trade-off region was characterized in \cite{jsec-mac} and is given by
		\begin{equation}
		\begin{aligned}
			&\mathscr{C}_{\rm dp,mac}=\bigcup_{0\le \gamma \le 1}\bigcup_{0\le \beta \le 1}
			\Big
			\{(0\le R_1,0\le R_2,D):\\
			&R_1 \le \frac{1}{2}\log \Big(1+\frac{\gamma P_1}{\sigma^2}\Big),
			R_2 \le \frac{1}{2}\log \Big(1+\frac{\beta P_2}{\sigma^2}\Big),\\
			&R_1+R_2 \le  \frac{1}{2}\log \Big(1+\frac{\gamma P_1+\beta P_2}{\sigma^2}\Big),\\		
			&D\ge \frac{Q(\gamma P_1+ \beta P_2+\sigma^2)}{L(\gamma,\beta)}
			\Big\}.
		\end{aligned}
	\end{equation}
where  $L(\gamma,\beta)$ is given in (\ref{lde}).
Letting $\rho=0$ in (\ref{c2}), we conclude that $\mathscr{C}_{\rm dp,mac}=\mathcal{R}(0)$. Then by changing $0 \le  \rho \le 1$, $\mathscr{C}_{\rm dp,mac}^{\rm fb}=\bigcup_{0\le \rho \le 1}\mathcal{R}(\rho)$ achieves a larger region than $\mathscr{C}_{\rm dp,mac}$, which indicates that \textit{feedback enhances the  optimal R-D trade-off region of the  DP-MAC with SE-R and without feedback}.
\end{remark}

\subsection{Proof of Theorem \ref{th01}}\label{sec3-3}
\subsubsection{Proof of Achievability}
The main idea of the SK-type feedback scheme for the DP-MAC with SE-R is briefly illustrated below.
Encoder $1$ uses power $(1-\gamma)P_1$ ($0\le \gamma \le 1$) to encode $S^n$ as $V_1^n=\sqrt{\frac{(1-\gamma)P_1}{Q}}S^n$,  uses power $\gamma P_1$ to encode $W_1$ and the feedback $Y^{n-1}$ as $G_1^n$, and then sends $X_1^n=V_1^n+G_1^n$.
Encoder $2$ uses power $(1-\beta)P_2$ ($0\le \beta \le 1$) to encode $S^n$ as $V_2^n=\sqrt{\frac{(1-\beta)P_2}{Q}}S^n$,  uses power $\beta P_2$ to encode $W_2$ and the feedback $Y^{n-1}$ as $G_2^n$, and then sends $X_2^n=V_2^n+G_2^n$.
Define $\lambda=1 +\sqrt{\frac{(1-\gamma)P_1}{Q}}+\sqrt{\frac{(1-\beta)P_2}{Q}}$,
the received channel output  can be written as
\begin{equation}\label{outp2}
		Y^n=X_1^n+X_2^n+S^n+\eta^n
			=G_1^n+G_2^n+\lambda S^n +\eta^n.
\end{equation}
Then $G_1^n$ and $G_2^n$ are constructed by the SK-type feedback scheme \cite{dpmac-fed} for the DP-MAC with a new state $\lambda S^n$.
The codewords $G_1^n$ and $G_2^n$ are described below.

At time $1$, $G_{2,1}=0$ and
\begin{equation}
	G_{1,1}=\sqrt{12\gamma P_1}(\theta_1-O_1),
\end{equation}
where $\theta_1=-\frac{1}{2}+\frac{(2W_1-1)}{2\cdot2^{nR_1}}$ and $O_1=\frac{\lambda}{\sqrt{12\gamma P_1}}S_1-\lambda \sum_{i=3}^{n}\mu_{1,i}S_i$.
The receiver gets
$Y_1=\sqrt{12\gamma P_1}(\theta_1-O_1) + \lambda S_1 +\eta_1$,
and calculates
$\hat{\theta}_{1,1}=\frac{Y_1}{\sqrt{12\gamma P_1}}=\theta_1+\lambda \sum_{i=3}^{n}\mu_{1,i}S_i+\varepsilon_{1,1}$, where $\varepsilon_{1,1}=\frac{\eta_1}{\sqrt{12\gamma P_1}}$, and $\alpha_{1,1}=\frac{\sigma^2}{12\gamma P_1}$.

At time $2$, $G_{1,2}=0$ and
\begin{equation}
	G_{2,2}=\sqrt{12\beta P_2}(\theta_2-O_2),
\end{equation}
where $\theta_2=-\frac{1}{2}+\frac{(2W_2-1)}{2\cdot2^{nR_2}}$ and $O_2=\frac{\lambda}{\sqrt{12\beta P_2}}S_2-\lambda \sum_{i=3}^{n}\mu_{2,i}S_i$.
The receiver gets
$Y_2=\sqrt{12\beta P_2}(\theta_2-O_2)+ \lambda S_2 +\eta_2$,
and calculates
$\hat{\theta}_{2,2}=\frac{Y_2}{\sqrt{12\beta P_2}}=\theta_2+\lambda \sum_{i=3}^{n}\mu_{2,i}S_i+\varepsilon_{2,2}$, where $\varepsilon_{2,2}=\frac{\eta_2}{\sqrt{12\beta P_2}}$, and $\alpha_{2,2}=\frac{\sigma^2}{12\beta P_2}$.
Let $\varepsilon_{1,2}\triangleq\varepsilon_{1,1}$ and $\alpha_{1,2}\triangleq\alpha_{1,1}$.

At time $t$ ($t\in\{3,\ldots,n\}$),
\begin{equation}\label{g1g2}
G_{1,t}=\sqrt{\frac{\gamma P_1}{\alpha_{1,t-1}}}\varepsilon_{1,t-1}, G_{2,t}=\text{sgn}(\rho_{t-1})\sqrt{\frac{\beta P_2}{\alpha_{2,t-1}}}\varepsilon_{2,t-1},
\end{equation}
where  for $i\in\{1,2\}$, $\alpha_{i,t-1}\triangleq {\rm Var}[\varepsilon_{i,t-1}]$,
$\varepsilon_{i,t-1}=\varepsilon_{i,t-2}-\mu_{i,t-1}(Y_{t-1}-\lambda S_{t-1})$,
$\mu_{i,t-1}=\frac{{\rm E}[\varepsilon_{i,t-2}(Y_{t-1}-\lambda S_{t-1})]}
{{\rm E}[(Y_{t-1}-\lambda S_{t-1})^2]}$,
$\rho_{t-1}=\frac{{\rm E}[\varepsilon_{1,t-1}\varepsilon_{2,t-1}]}{\sqrt{\alpha_{1,t-1}\alpha_{2,t-1}}}$ is the correlation coefficient between $\varepsilon_{1,t-1}$ and $\varepsilon_{2,t-1}$, $\text{sgn}(\rho_{t-1})$ is equal to $1$ if $\rho_{t-1}\ge 0$ and $0$ otherwise.

The analysis of achievable rates is similar to the SK-type schemes \cite{dpmac-fed,sk-mac} for the multiple-access channels. Hence we omit some steps of the proof and conclude that the achievable rates of this scheme satisfy
\begin{equation}
\begin{aligned}
&R_1\le \frac{1}{2}\log \Big(1+\frac{\gamma P_1 (1-\rho^{*2})}{\sigma^2}\Big),\\
&R_2 \le \frac{1}{2}\log \Big(1+\frac{\beta P_2 (1-\rho^{*2})}{\sigma^2}\Big),\\
&R_1+R_2\le \frac{1}{2}\log \Big(1+\frac{\gamma P_1+ \beta P_2+2\sqrt{\gamma P_1 \beta P_2}\rho^*}{\sigma^2}\Big),
\end{aligned}
\end{equation}
where $\rho^*$ is the unique positive root in $(0,1)$ of
\begin{equation}
\begin{aligned}
&\sigma^2\left(\gamma P_1+ \beta P_2+2\sqrt{\gamma P_1 \beta P_2}\rho+\sigma^2\right)\\
&=\left(\beta P_2 (1-\rho^{*2})+\sigma^2\right)\left(\gamma P_1 (1-\rho^{*2})+\sigma^2\right).
\end{aligned}
\end{equation}

The analysis of state estimation is similar to that of Section \ref{sec2-2}.
In this scheme, $\hat{S}_1=0$ and $\hat{S}_2=0$.
For $t\ge 3$, $\hat{S}_t=\frac{{\rm E}[S_tY_t]}{{\rm E}[Y_t^2]}Y_t$.
Hence, from (\ref{ee-def}), (\ref{outp2}) and (\ref{g1g2}),
the resulting MMSE distortion $D$ is calculated as
\begin{equation}
	D=\frac{Q(\gamma P_1+ \beta P_2+\sigma^2+2\sqrt{\gamma P_1 \beta P_2}\rho^*)}{L(\gamma,\beta)+2\sqrt{\gamma P_1 \beta P_2}\rho^*},
\end{equation}
where $L(\gamma,\beta)$ is given in (\ref{lde}).
%\begin{equation}
%	\begin{aligned}
%		D&=\lim_{n\to \infty}\frac{1}{n}\sum_{t=1}^{n}{\rm E}[(S_t-\hat{S}_t)^2]\\
%%		\stackrel{(a)}=\lim_{n\to \infty}\frac{1}{n}\sum_{t=3}^{n}
%%		\Big({\rm E}[(S_t)^2]-\frac{({\rm E}[S_tY_t])^2}{{\rm E}[Y_t^2]}\Big)
%%		\\
%		&\stackrel{(b)}=\frac{Q(\gamma P_1+ \beta P_2+\sigma^2+2\sqrt{\gamma P_1 \beta P_2}\rho^*)}{L(\gamma,\beta)+2\sqrt{\gamma P_1 \beta P_2}\rho^*},
%	\end{aligned}
%\end{equation}
%where (a) follows from the fact that $\lim_{n\to \infty}\frac{1}{n}({\rm E}[(S_1)^2]+{\rm E}[(S_2)^2])=\lim_{n\to \infty}\frac{2Q}{n}=0$,
%(b) follows from
%(\ref{outp2}) and (\ref{g1g2}), and $L(\gamma,\beta)$ is given in (\ref{lde}).
%{equation}
%\begin{aligned}
%\hat{S}_t&=\frac{{\rm E}[S_tY_t]}{{\rm E}[Y_t^2]}Y_t\\
%&=
%	\frac{\lambda Q}{\gamma P_1+ \beta P_2+2\sqrt{\gamma P_1 \beta P_2}\rho^*+\sigma^2+\lambda^2Q}Y_t,
%\end{aligned}
%\end{equation}
%where

%According to (\ref{outp2}) and (\ref{g1g2}),
%\begin{equation}
%	D=\frac{Q(\gamma P_1+ \beta P_2+\sigma^2+2\sqrt{\gamma P_1 \beta P_2}\rho^*)}{L(\gamma,\beta)+2\sqrt{\gamma P_1 \beta P_2}\rho^*}
%\end{equation}
%where $L(\gamma,\beta)$ is given in (\ref{lde}).

The above scheme achieves the region $\mathcal{R}(\rho^*)$.
Following \cite{dpmac-fed,sk-mac}, we conclude that the region $\mathcal{R}(\rho)$ for $\rho^*< \rho \le 1$ is strictly contained in $\mathcal{R}(\rho^*)$, and the region $\mathcal{R}(\rho)$ for $0\le  \rho <\rho^*$ is achieved by combining the random coding and the above SK-type scheme.
The achievability proof is completed.

\subsubsection{Proof of Converse}
The main idea of the converse is that due to the presence of feedback,
a new correlation coefficient is introduced between channel inputs $X_1$ and $X_2$ given $S$, in comparison to the converse proof of the DP-MAC with SE-R and without feedback \cite{jsec-mac}. The detailed proof is given below.

First, rate $R_1$ is bounded by
\begin{equation}\label{R1}
\begin{aligned}
	&nR_1=H(W_1) \stackrel{(a)}\le I(W_1;Y^n|W_2,S^n)+n\epsilon_{n}\\
	&\stackrel{(b)} \le \sum_{i=1}^{n} [h(Y_i|X_{2,i},S_i)-h(Y_i|X_{1,i},X_{2,i},S_i)]
+n\epsilon_{1n}
\end{aligned}	
\end{equation}
where (a) follows from Fano's inequality and $\epsilon_{n} \to 0$ as $n \to \infty$, and (b) follows from the chain rule and data processing inequality.
The upper bound of $R_2$ is similar.
Analogously, the sum rate $R_1+R_2$ is bounded by
\begin{equation}\label{RS}
\begin{aligned}
&n(R_1\!+\! R_2)=H(W_1,W_2)\le I(W_1,W_2;Y^n|S^n)\!+\! n\epsilon_{n}\\
&\le \sum_{i=1}^{n}[h(Y_i|S_i)-h(Y_i|X_{1,i},X_{2,i},S_i)]+n\epsilon_{n}.
\end{aligned}
\end{equation}

According to Lemma 1 in \cite{jsec-mac},  any communication scheme achieving a
distortion $D_n=\frac{1}{n}\sum_{t=1}^{n}{\rm E}[(S_t-\hat{S}_t)^2]$ over block length $n$ will have $\frac{1}{2}\log \frac{Q}{D_n} \le \frac{1}{n}I(S^n;Y^n)$.
Next, the sum $R_1+R_2+\frac{1}{2}\log \frac{Q}{D_n}$ is bounded by
{\setlength\abovedisplayskip{0.1cm}
	\setlength\belowdisplayskip{0.1cm}\begin{equation}\label{rs-d}
	\begin{aligned}
	R_1&+R_2 +\frac{1}{2}\log \frac{Q}{D_n}\le
	 \frac{1}{n}I(W_1,W_2;Y^n|S^n)\\
	 &+\frac{1}{n}I(S^n;Y^n)\!+\!\epsilon_{n}=\frac{1}{n}I(W_1,W_2,S^n;Y^n)+\! \epsilon_{n}\\
	 &\le \frac{1}{n}\sum_{i=1}^{n}[h(Y_i)-h(Y_i|X_{1,i},X_{2,i},S_i)] +\!\epsilon_{n}.
	\end{aligned}
\end{equation}}%
Assume that $X_{1,i}$, $X_{2,i}$ and $S_{i}$ have zero means, variances $P_1$, $P_2$ and $Q$. In addition, the covariance matrix of the vector $[X_{1,i},X_{2,i},S_i]^T$ is assumed to be
{\setlength\abovedisplayskip{0.1cm}
	\setlength\belowdisplayskip{0.1cm}\begin{equation}\label{ks}
	K_{X_{1,i},X_{2,i},S_i}=
	\begin{bmatrix}
		P_1	& a_i\sqrt{P_1P_2} & b_i\sqrt{P_1Q} \\
		a_i\sqrt{P_1P_2}	& P_2 & c_i\sqrt{P_2Q} \\
		b_i\sqrt{P_1Q}	& c_i\sqrt{P_2Q} & Q
	\end{bmatrix}.
\end{equation}}%
From (\ref{ks}), the correlation coefficient between $X_{1,i}$ and $X_{2,i}$ given $S_{i}$  is $\rho_i=\frac{a_i-b_ic_i}{\sqrt{1-b_i^2}\sqrt{1-c_i^2}}$
(here note that $\rho_i=0$ in the converse proof of \cite{jsec-mac}).
Define \begin{equation}\label{def-p}
	\gamma \triangleq \frac{1}{n} \sum_{i=1}^{n} (1-b_i^2), \beta \triangleq \frac{1}{n} \sum_{i=1}^{n} (1-c_i^2), \rho \triangleq \frac{1}{n} \sum_{i=1}^{n} \rho_i.
\end{equation}
By calculating the differential entropy in (\ref{R1})-(\ref{RS}) and applying Jensen's inequality together with (\ref{def-p}), the upper bounds of $R_1$, $R_2$ and $R_1+R_2$ given in (\ref{c2}) are obtained.
Analogously, form (\ref{rs-d}), we have
\begin{equation}\label{d-it}
	D_n \ge \frac{Q\sigma^2 \cdot 2^{2(R_1+R_2)}}{L(\gamma,\beta)+2\sqrt{\gamma P_1 \beta P_2}\rho},
\end{equation}
where $L(\gamma,\beta)$ is given in (\ref{lde}).
Substituting a fixed $R_1+R_2=\frac{1}{2}\log \Big(1+\frac{\gamma P_1+\beta P_2+2\sqrt{\gamma P_1 \beta P_2}\rho}{\sigma^2}\Big)$ into (\ref{d-it}), the bound of distortion in  (\ref{c2}) is obtained.  The detailed proof is shown in Appendix \ref{App1}.
The converse proof is completed.

\section{Discussion: A Noisy Observation Case}\label{sec-4}

We consider a noisy state observation  case for the DPC with SE-R and feedback, where the transmitter obtains a non-causal channel state corrupted by AWGN, i.e.,
\begin{equation}
		\tilde{S}^n=S^n+Z^n,
\end{equation}
where $Z^n=(Z_1,Z_2,\ldots,Z_n)$ and $Z_t \sim \mathcal{N}(0,\sigma_z^2)$ is i.i.d. Gaussian noise for $t\in \{1,2,\ldots,n\}$.
Then, the channel input at time instant $t$ satisfies
 $X_t=\varphi_t(W,\tilde{S}^n,Y^{t-1})$.

Since $S_t$ and $\tilde{S}_t$ are  jointly Gaussian distributed, $S_t$ can be re-written as
\begin{equation}
	S_t=\kappa \tilde{S}_t+ \tilde{Z}_t,
\end{equation}
where $\kappa=\frac{Q}{Q+\sigma_z^2}$, and $\tilde{Z}_t\sim \mathcal{N}(0,\kappa\sigma_z^2)$ is independent of $\tilde{S}$.
At time instant $t$, the equivalent channel output is written as
\begin{equation}\label{eqchan}
	Y_t=X_t+\kappa \tilde{S}_t+\tilde{Z}_t+\eta_t.
\end{equation}
Based on this equivalent model, the proposed scheme in Section \ref{sec2-2}
can be extended to the noisy observation case by viewing $\kappa \tilde{S}_t$ as a new state known at the transmitter, and $\tilde{Z}_t+\eta_t\sim \mathcal{N}(0,\kappa\sigma_z^2+\sigma^2)$ as a new channel noise.
The achievable R-D trade-off region of this extended scheme is
given by
{\setlength\abovedisplayskip{0cm}
	\setlength\belowdisplayskip{0cm}
\begin{equation}\label{c1-noisy}
	\begin{aligned}
		&\mathcal{R}_{\rm dp-fb}^{\rm nso}=\bigcup_{0 \le \gamma \le 1}
		\Bigg
		\{(R,D):
		0\le R\le \frac{1}{2}\log \Big(1 \!+\frac{\gamma P}{\kappa\sigma_z^2+\sigma^2}\Big),\\
		&D\ge Q\frac{\gamma P+\sigma^2+\kappa \sigma_z^2+(1-\kappa)\left(\sqrt{\kappa Q}+\sqrt{(1-\gamma)P}\right)^2}{\gamma P+\left(\sqrt{\kappa Q}+\sqrt{(1-\gamma)P}\right)^2+\sigma^2+\kappa \sigma_z^2}\Bigg\}.
	\end{aligned}
\end{equation}}%

Note that \cite{state-noisy} first studied the noisy observation case for the DPC with SE-R and without feedback, and derived the inner and outer bounds on the optimal R-D trad-off region. Very recently, this optimal region was
totally determined in \cite{xu}.
It is easy to verify that the converse proof of \cite{xu} remains valid in the present of feedback. This indicates that the optimal R-D trad-off region of the DPC with SE-R, noisy state observation and feedback is equal to that of same model without feedback.

\vspace{-1em}
\begin{figure}[h!]
	\centering
	\includegraphics[width=0.99\linewidth]{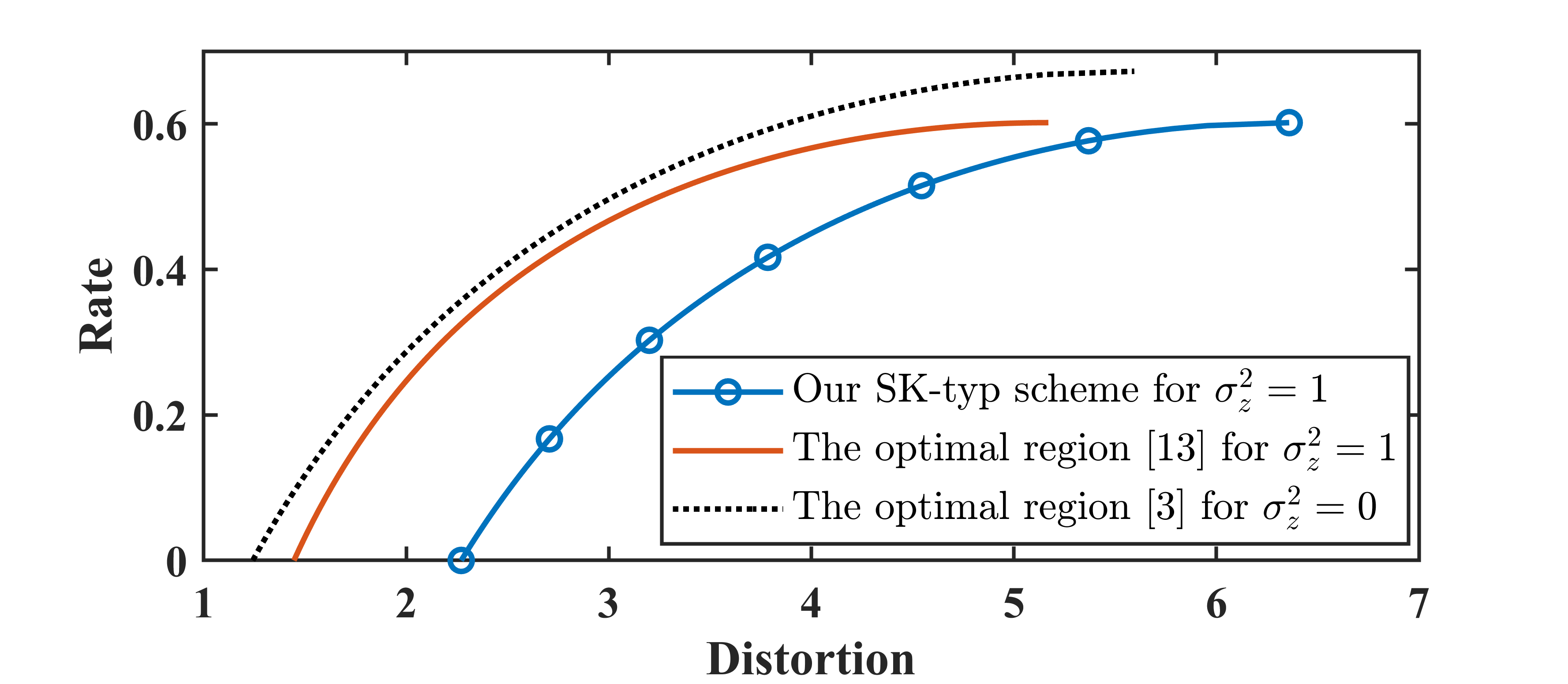}
%	\captionsetup{belowskip=0pt}
	\caption{The comparison of $\mathscr{C}_{\rm dp}^{\rm  fb}$, the optimal region of \cite{xu}, and our SK-type scheme for $P=7.7$, $Q=10$, $\sigma^2=5$ and   $\sigma_z^2=1$.}
	\label{fignu}
\end{figure}

\vspace{-1em}
Fig. \ref{fignu} shows that our extended noisy observation scheme is not optimal. The reason is given below.
In the achievability scheme of \cite{xu},
the transmitter uses Gelfand-Pinsker coding \cite{G-P}  to transmit the message based on this equivalent channel (\ref{eqchan}), and the receiver estimates the state $S$ based on the Gelfand-Pinsker codeword $U$ and channel output $Y$.
While in our SK-type scheme, the receiver's estimation of $S$ is based only on $Y$, which leads to a reduction in the estimation performance.

Analogously, the proposed scheme of the DP-MAC in Section \ref{sec3-3} can be extended to the noisy state observation case.
However, this extension for the DP-MAC is not optimal which follows from the similar reason given above, hence we omit the detailed explanation here.

Our future work is to design capacity-achieving SK-type feedback schemes for dirty paper channels with SE-R and noisy state observation at the transmitter.

Finally, we would like to note that the assumption of noiseless feedback links, although not realistic in general, is widely assumed in the feedback communication literature. This assumption has important theoretical implications and may be a useful step towards investigating more realistic noisy feedback scenarios.
Based on the schemes proposed in this paper,
we are currently working on the design of SK-type schemes with rate-limited and noisy feedback.

%, and it is still in development.

%\section*{Acknowledgment}
%The authors would like to thank Prof. Yinfei Xu for valuable suggestions and comments on this paper.
%The design of SK-type schemes with rate-limited and noisy feedback is currently a work in progress.

%\section{Conclusion and Future Work}\label{sec-5}
%In this paper, we propose two SK-type feedback scheme for the DPC with SE-R and the DP-MAC with SE-R, which combine the superposition coding and the classical SK-type schemes, and showed these scheme are optimal.

\renewcommand{\theequation}{A\arabic{equation}}
\appendices
\section{The Converse Part of Theorem 1}\label{App1}
\setcounter{equation}{0}

First, rate $R_1$ is bounded by
\begin{equation}\label{R1-a}
	\begin{aligned}
		R_1=&\;\frac{1}{n}H(W_1)\stackrel{(a)}=\frac{1}{n}H(W_1|W_2)\\
		=&\;\frac{1}{n}\big[H(W_1|W_2)-H(W_1|Y^n,S^n,W_2)\\
		&\;\;\;\;\;\;+H(W_1|Y^n,S^n,W_2)\big]\\
		\stackrel{(b)}\le &\; \frac{1}{n}I(W_1;Y^n,S^n|W_2)+\epsilon_{n}\\
		=&\;\frac{1}{n}\big[I(W_1;Y^n|S^n,W_2)+I(W_1;S^n|W_2)\big]+\epsilon_{n}\\
		\stackrel{(c)}=&\;\frac{1}{n}\Bigg[\sum_{i=1}^{n}\big[h(Y_i|Y^{i-1},S^n,W_2)\\
		&\;\;\;\;\;\;-h(Y_i|Y^{i-1},S^n,W_2,W_1)\big]\Bigg]+\epsilon_{n}\\
		\stackrel{(d)}=&\;\frac{1}{n}\Bigg[\sum_{i=1}^{n}\big[h(Y_i|X_{2,i},Y^{i-1},S^n,W_2)\\
		&\;\;\;\;\;\;-h(Y_i|X_{1,i},X_{2,i},Y^{i-1},S^n,W_2,W_1)\big]\Bigg]+\epsilon_{n}\\
		\stackrel{(e)} \le &\; \frac{1}{n}\sum_{i=1}^{n} \big[h(Y_i|X_{2,i},S_i)-h(Y_i|X_{1,i},X_{2,i},S_i)\big]
		+\epsilon_{n},
	\end{aligned}	
\end{equation}
where
\begin{enumerate}[(a)]
	\item follows from the fact that $W_2$ is independent of $W_1$;
	\item follows from  conditioning reduces entropy and Fano's inequality, i.e.,
	\begin{equation}
		H(W_1|Y^n,S^n,W_2)\le H(W_1,W_2|Y^n,S^n) \le n\epsilon_{n}, \nonumber
	\end{equation}
	where $\epsilon_{n}\to 0$ as $n \to \infty$;
	\item follows from the fact that $W_1$, $S^n$ and $W_2$ are independent and chain rule;
	\item follows from the fact that $X_{1,i}=\varphi_{1,i}(W_1,Y^{i-1},S^{n})$ and $X_{2,i}=\varphi_{2,i}(W_2,Y^{i-1},S^{n})$;
	\item follows from conditioning reduces entropy and the fact that the channel output at time $i$	depends only on the state $S_i$ and the inputs $X_{1,i}$ and  $X_{2,i}$, i.e.,
\end{enumerate}
\begin{equation}
	\begin{aligned}
		H(Y_i|X_{2,i},Y^{i-1},S^n,W_2) &\le H(Y_i|X_{2,i},S_{i}),\nonumber\\	
		H(Y_i|X_{1,i},X_{2,i},Y^{i-1},S^n,W_2,W_1)&=H(Y_i|X_{1,i},X_{2,i},S_i).
	\end{aligned}
\end{equation}

Similarly,
\begin{equation}
	R_2\le \frac{1}{n}\sum_{i=1}^{n} \big[h(Y_i|X_{1,i},S_i)-h(Y_i|X_{1,i},X_{2,i},S_i)\big]
	+\epsilon_{n},
\end{equation}
and the sum rate $R_1+R_2$ is bounded by
\begin{equation}\label{RS-a}
	\begin{aligned}
		R_1+ R_2&=\frac{1}{n}H(W_1,W_2)
		\le \frac{1}{n}I(W_1,W_2;Y^n|S^n)+\epsilon_{n}\\
		&\le \frac{1}{n} \sum_{i=1}^{n}\big[h(Y_i|S_i)-h(Y_i|X_{1,i},X_{2,i},S_i)\big]+\epsilon_{n}.
	\end{aligned}
\end{equation}

According to Lemma 1 in \cite{jsec-mac},  any communication scheme achieving a
distortion $D_n=\frac{1}{n}\sum_{t=1}^{n}{\rm E}[(S_t-\hat{S}_t)^2]$ over block length $n$ will have
\begin{equation}
	\frac{1}{2}\log \frac{Q}{D_n} \le \frac{1}{n}I(S^n;Y^n).
\end{equation}
Next, the sum $R_1+R_2+\frac{1}{2}\log \frac{Q}{D_n}$ is bounded by
\begin{equation}\label{rs-d-a}
	\begin{aligned}
		R_1&+R_2+\frac{1}{2}\log \frac{Q}{D_n}\\
		&\le
		\frac{1}{n}I(W_1,W_2;Y^n|S^n)
		+\frac{1}{n}I(S^n;Y^n)+\epsilon_{n}\\
		&=\frac{1}{n}I(W_1,W_2,S^n;Y^n)+ \epsilon_{n}\\
		&\le \frac{1}{n}\sum_{i=1}^{n}\big[h(Y_i)-h(Y_i|X_{1,i},X_{2,i},S_i)\big] +\epsilon_{n}.
	\end{aligned}
\end{equation}

Assume that $X_{1,i}$, $X_{2,i}$ and $S_{i}$ have zero means, variances $P_1$, $P_2$ and $Q$. In addition, the covariance matrix of the vector $[X_{1,i},X_{2,i},S_i]^T$ is assumed to be
\begin{equation}\label{ks-a}
	K_{X_{1,i},X_{2,i},S_i}=
	\begin{bmatrix}
		P_1	& a_i\sqrt{P_1P_2} & b_i\sqrt{P_1Q} \\
		a_i\sqrt{P_1P_2}	& P_2 & c_i\sqrt{P_2Q} \\
		b_i\sqrt{P_1Q}	& c_i\sqrt{P_2Q} & Q
	\end{bmatrix},
\end{equation}
where $-1\le a_i \le 1$, $-1\le b_i \le 1$, and $-1\le c_i \le 1$.
From (\ref{ks-a}),   the covariance matrix of the error vector of
the MMSE estimate of ${\bf X}_i=[X_{1,i},X_{2,i}]^T$ given $S_i$ is
\begin{equation}\label{givs-k}
	\begin{aligned}
		K_{{\bf X}_i|S_i}&=K_{{\bf X}_i}-K_{{\bf X}_iS_i} K_{S_i}^{-1}K_{S_i{\bf X}_i}\\
		&=\begin{bmatrix}
			(1-b_i^2)P_1	& (a_i-b_ic_i)\sqrt{P_1P_2}  \\
			(a_i-b_ic_i)\sqrt{P_1P_2}	& (1-c_i^2)P_2 \\
		\end{bmatrix},
	\end{aligned}
\end{equation}
and the correlation coefficient between $X_{1,i}$ and $X_{2,i}$ given $S_i$ is denoted as
\begin{equation}\label{rhoi}
	\rho_i =\frac{a_i-b_ic_i}{\sqrt{1-b_i^2}\sqrt{1-c_i^2}}.
\end{equation}
Here note that $\rho_i=0$ in the converse proof of \cite{jsec-mac}.
Then we calculate the differential entropy in (\ref{R1-a})-(\ref{RS-a}) and (\ref{rs-d-a}).

According to $Y_i=X_{1,i}+X_{2,i}+S_i+\eta_i$, we have
\begin{equation}\label{wfs-1}
	h(Y_i|X_{1,i},X_{2,i},S_i)=h(\eta_i)\stackrel{(a)}=\frac{1}{2}\log (2 \pi e \sigma^2),
\end{equation}
where (a) follows from the fact that $\eta_i$ is the Gaussian noise with zero-mean and variance $\sigma^2$,
and
\begin{equation}
	\begin{aligned}
		h(Y_i)&=h(X_{1,i}+X_{2,i}+S_i+\eta_i)\\
		&\le \frac{1}{2}\log \big(2 \pi e \big(P_{1}+P_{2}+Q+2a_i\sqrt{P_{1}P_{2}}\\ &\;\;\;+2b_i\sqrt{P_{1}Q}+2c_i\sqrt{P_{2}Q}+\sigma^2\big)\big),
	\end{aligned}	
\end{equation}
where the last step follows from maximum differential entropy lemma.

From (\ref{givs-k}) and (\ref{rhoi}), the term $h(Y_i|S_i)$ can be  written as
\begin{equation}
	\begin{aligned}
		h&(Y_i|S_i)=h(X_{1,i}+X_{2,i}+\eta_i|S_i)\\
		&\le \frac{1}{2}\log \Big(2\pi e \Big((1-b_i^2)P_{1}+(1-c_i^2)P_{2}\\
		&\;\;\;\;+2(a_i-b_ic_i)\sqrt{P_1P_2}+\sigma^2\Big)	
		\Big)\\
		&= \frac{1}{2}\log \Big(2 \pi e \Big((1-b_i^2)P_{1}+(1-c_i^2)P_{2}\\
		&\;\;\;\;+2\sqrt{P_{1}P_{2}}\sqrt{1-b_i^2}\sqrt{1-c_i^2}\rho_i+\sigma^2\Big)\Big).
	\end{aligned}
\end{equation}

Analogously, we have
\begin{equation}
	\begin{aligned}
		h&(Y_i|X_{2,i},S_i)=h(X_{1,i}+\eta_i|X_{2,i},S_i)\\
		&\le \frac{1}{2}\log \left(2 \pi e \left({\rm E}[(X_{1,i}-\hat{X}_{1,i}(X_{2,i},S_i))^2]+\sigma^2\right)\right)\\
		&= \frac{1}{2}\log \left(2 \pi e \left((1-b_i^2)P_{1}(1-\rho_i^2)+\sigma^2\right)\right),
	\end{aligned}
\end{equation}
where $\hat{X}_{1,i}(X_{2,i},S_i)$ is the MMSE estimate of $X_{1,i}$ given ${\bf V}_i=[X_{2,i},S_i]^T$, and the corresponding  minimum mean squared error is
\begin{equation}
	\begin{aligned}
		&K_{X_{1,i}}-K_{X_i{\bf V}_i}K_{{\bf V}_i}^{-1}K_{{\bf V}_iX_{1,i}}\\
		&=P_{1}\left(1-\frac{a_i^2-2a_ib_ic_i+b_i^2}{1-c_i^2}\right)
		=(1-b_i^2)P_{1}(1-\rho_i^2).
	\end{aligned}
\end{equation}
Similarly,
\begin{equation}\label{wfs-5}
	h(Y_i|X_{1,i},S_i)\le \frac{1}{2}\log \left(2 \pi e \left((1-c_i^2)P_{2}(1-\rho_i^2)+\sigma^2\right)\right).
\end{equation}

Define \begin{equation}\label{def-p-a}
	\gamma \triangleq \frac{1}{n} \sum_{i=1}^{n} (1-b_i^2), \beta \triangleq \frac{1}{n} \sum_{i=1}^{n} (1-c_i^2), \rho \triangleq \frac{1}{n} \sum_{i=1}^{n} \rho_i.
\end{equation}
Since $-1\le a_i,b_i,c_i \le 1$, we conclude that $0\le \gamma \le 1$, $0\le \beta \le 1$ and $-1\le \rho \le 1$.

By substituting (\ref{wfs-1})-(\ref{wfs-5}) into  (\ref{R1-a})-(\ref{RS-a}), and applying Jensen's inequality together with (\ref{def-p-a}), we conclude that
\begin{equation}\label{r12-up}
	\begin{aligned}
		&R_1\le  \frac{1}{2}\log \left(1+\frac{\gamma P_1}{\sigma^2}(1-\rho^2)\right),\\
		&R_2\le  \frac{1}{2}\log \left(1+\frac{\beta P_2}{\sigma^2}(1-\rho^2)\right),\\
		&R_1+R_2 \le \frac{1}{2}\log \left(1+\frac{\gamma P_{1}+\beta P_{2}+2\sqrt{\gamma \beta P_{1}P_{2}}\rho}{\sigma^2}\right),
	\end{aligned}
\end{equation}
Analogously, from (\ref{rs-d-a}), we have
\begin{equation}\label{r1r2d}
	R_1+R_2 +\frac{1}{2}\log \frac{Q}{D_n}
	\le \frac{1}{2}\log \left(1+\frac{L(\gamma,\beta)+2\sqrt{\gamma P_1 \beta P_2}\rho}{\sigma^2}\right),
\end{equation}
where
\begin{equation}\label{lde-a}
	\begin{aligned}
		L(\gamma,\beta)&=P_1+P_2+Q+\sigma^2+2\sqrt{(1-\gamma)P_1Q}\\
		&+2\sqrt{(1-\beta)P_2Q}+2\sqrt{(1-\gamma)(1-\beta)P_1P_2}.
	\end{aligned}
\end{equation}	

Equation (\ref{r1r2d}) can be  written as
\begin{equation}\label{d-it-a}
	D_n \ge \frac{Q\sigma^2 \cdot 2^{2(R_1+R_2)}}{L(\gamma,\beta)+2\sqrt{\gamma P_1 \beta P_2}\rho}.
\end{equation}
Substituting a fixed $R_1+R_2=\frac{1}{2}\log \Big(1+\frac{\gamma P_1+\beta P_2+2\sqrt{\gamma P_1 \beta P_2}\rho}{\sigma^2}\Big)$ into (\ref{d-it-a}), the distortion is bounded by
\begin{equation}\label{dn-low}
	D_n \ge \frac{Q(\gamma P_1+ \beta P_2+\sigma^2+2\sqrt{\gamma P_1 \beta P_2}\rho)}{L(\gamma,\beta)+2\sqrt{\gamma P_1 \beta P_2}\rho}.
\end{equation}
Hence, for any $0\le \gamma \le 1$, $0\le \beta \le 1$ and $-1\le \rho \le 1$, the closure of the convex hull of all regions that satisfy (\ref{r12-up}) for rate and (\ref{dn-low}) for distortion is the outer bound of the optimal R-D trade-off region $\mathscr{C}_{\rm dp,mac}^{\rm fb}$ of the DP-MAC with  SE-R and feedback.
Finally, since the region that satisfies (\ref{r12-up}) and (\ref{dn-low}) for $-1\le \rho <0$ is contained in the region for $0\le \rho \le 1$, the converse proof is completed.

\end{document}